\DeclareFontFamily{U}{mathx}{\hyphenchar\font45}
\DeclareFontShape{U}{mathx}{m}{n}{
      <5> <6> <7> <8> <9> <10>
      <10.95> <12> <14.4> <17.28> <20.74> <24.88>
      mathx10
      }{}
\DeclareSymbolFont{mathx}{U}{mathx}{m}{n}
\DeclareMathSymbol{\bigplus}{1}{mathx}{"90}
\DeclareMathSymbol{\bigtimes}{1}{mathx}{"91}
\newtheorem{definition}{Definition}
\newtheorem{proposition}{Proposition}
\newtheorem{fact}{Fact}
\newtheorem{theorem}{Theorem}
\newcommand{\C}{\mathbb{C}}
\DeclareMathOperator*{\E}{{\rm {\bf E}}\,}
\DeclareMathOperator*{\Tr}{{\rm Tr}\;}
\newcommand{\zero}{\leavevmode\hbox{\small l\kern-3.5pt\normalsize0}}
\newcommand{\one}{\leavevmode\hbox{\small1\kern-3.8pt\normalsize1}}
\newcommand{\elltwo}[1]{\left\|{ #1 }\right\|_2}
\newcommand{\ket}[1]{| #1 \rangle}
\newcommand{\bra}[1]{\langle #1 |}
\newcommand{\ketbra}[1]{\ket{#1}\bra{#1}}
\newcommand{\braket}[2]{\langle {#1} \ket{#2}}
\newcommand{\polylog}{\mathrm{polylog}}
\newcommand{\poly}{\mathrm{poly}}
\title{{\bf A quantum Johnson-Lindenstrauss lemma via
unitary $t$-designs
}}
\author{
Pranab Sen\thanks{
School of Technology and Computer Science, Tata Institute of Fundamental
Research, Mumbai 400005, India.
Email: {\sf pgdsen@tcs.tifr.res.in}
}
}
\date{}
\begin{document}
\maketitle

\begin{abstract}
The famous Johnson-Lindenstrauss lemma~\cite{JL} states that for any 
set of $n$ vectors $\{v_i\}_{i=1}^n \in \C^{d_1}$ and 
any $\epsilon > 0$, there is a 
linear transformation $T: \C^{d_1} \rightarrow \C^{d_2}$, 
$d_2 = O(\epsilon ^{-2} \log n)$ such that 
$\elltwo{T(v_i)} \in (1 \pm \epsilon) \elltwo{v_i}$ for all $i \in [n]$. 
In fact, a Haar random $d_1 \times d_1$ unitary transformation followed
by projection onto the first $d_2$ coordinates followed by a scaling
of $\sqrt{\frac{d_1}{d_2}}$ works as a valid transformation 
$T$ with high probability. 
In this work, we show that the Haar random $d_1 \times d_1$ unitary
can be replaced by a uniformly random unitary chosen from a finite 
set called an approximate unitary $t$-design for $t = O(d_2)$. 
Choosing a unitary from such a design requires only 
$O(d_2 \log d_1)$ random bits as opposed to 
$2^{\Omega(d_1^2)}$ random bits required to choose a Haar random unitary
with reasonable precision.
Moreover, since such unitaries can be efficiently implemented in 
the superpositional
setting, our result can be viewed as an efficient quantum 
Johnson-Lindenstrauss transform akin to efficient quantum 
Fourier transforms
widely used in earlier work on quantum algorithms.

We prove our result by leveraging a method of
Low~\cite{Low} for showing concentration for approximate unitary $t$-designs. We
discuss algorithmic advantages and limitations of our result and conclude 
with a toy application to private information retrieval.
\end{abstract}

\section{Introduction}
The Johnson Lindenstrauss lemma is one of the oldest dimensionality
reduction results for the $\ell_2$-norm and has applications to many
problems in computer science, signal processing, compressed sensing etc. 
Informally speaking,
it says that any set of $n$ points in high dimensional Euclidean space
(say of dimension $d_1$)
can be embedded into $d_2 := O(\epsilon^{-2} \log n)$-dimensional Euclidean
space preserving all the ${n \choose 2}$ pairwise distances to within
a multiplicative factor of $1 \pm \epsilon$. An equivalent description
would be that the embedding approximately preserves all the pairwise
angles or inner products. Moreover, with high probablity
this embedding
can be achieved by taking a Haar random $d_1 \times d_1$ unitary $U$,
applying $U$ to all the points in the set, projecting onto the
first $d_2$ coordinates and scaling the result by $\sqrt{\frac{d_1}{d_2}}$.
The advantages of such an embedding are manifold: the embedding is
linear, oblivious to the actual set of points, with target dimension 
independent of the source dimension, and can be implemented
by a randomised algorithm in $O(d_1^2 \polylog (d_1))$ time.
Fast Johnson Lindenstrauss transforms, akin to fast Fourier transforms
arising from the discrete Fourier transform, have also been discovered
(see e.g.~\cite{AC}).
They typically run in $O(d_1 \polylog(d_1))$ time.

In this paper, we work in the quantum superpositional setting. By this
we mean that our source vectors are not provided explicitly, 
but rather are the state vectors of pure quantum states with Hilbert
space $\C^{d_1}$. Then, if we choose a Haar random $d_1 \times d_1$
unitary $U$, applying it via a quantum circuit to a pure state, measure
the name of a {\em block}, where the $d_1$ coordinates are divided into
$d_1/d_2$ blocks of $d_2$ coordinates each, then conditioned on a
certain block name `i' appearing, all the pairwise inner products 
are approximately preserved. In other words, even the unitary $U$ is
applied only in the superpositional setting. One may now wonder if
we can implement the unitary $U$ via an efficient quantum circuit
(i.e. of size $\polylog(d_1)$). If so, this would give rise to an
efficient {\em quantum Johnson Lindenstrauss transform}, akin to 
efficient quantum Fourier transforms arising from classical discrete
Fourier transforms (e.g.~\cite{Coppersmith,MZ}). 
The efficient quantum Fourier transform is at the heart of 
many famous quantum algorithms, including Shor's algorithms
for integer factoring and discrete logarithm~\cite{Shor}. 

We show that with high probability, a uniformly random 
$d_1 \times d_1$ unitary from
an approximate $t$-design, where $t = \Theta(d_2)$, suffices for an
efficient quantum Johnson Lindenstrauss transform.
For this value of $t$, both choosing a uniformly random unitary from
the  $t$-design as well as applying it to quantum states are efficient
to implement by quantum algorithms. This follows from the fact that
so-called {\em local random quantum circuits} of size 
$s = t^{10} (\log d_1)^2 \log (1/\alpha)$ form
an $\alpha$-approximate $t$-design of $d_1 \times d_1$ unitaries 
with high probability~\cite{BHH}. The number of random bits required
to describe such a local random circuit is at most 
$O(s \log s \log \log d_1)$.

A limitation of our quantum Johnson Lindenstrauss transform is that 
the distribution over the block
names is almost uniform. We thus have no `control' over the block name,
because of which we cannot apply our transform for most of the classical
settings where the Johnson Lindenstrauss lemma was used (in the classical
explicit setting, one can always force the block to be the first block
without any trouble). Nevertheless, we do give a toy application 
of our transform
to the important problem of private information retrieval. Finding
more applications of our transform is an important open problem.

\paragraph{Related work:}
Our quantum Johnson Lindenstrauss transform approximately preserves the
pairwise inner products for a block name with high probability over
the choice of the unitary from the design.
The block dimension is $d_2$. If one wants to approximately preserve
the pairwise overlaps averaged over all the unitaries from a finite
set, then there much smaller block sizes suffice. This variant is also
known as quantum identification codes. Fawzi, Hayden and 
Sen~\cite{FHS} constructed such codes with very small block size by
efficiently quantising (in the sense of quantum Fourier transform
versus classical discrete Fourier transform) low distortion embeddings 
of $\ell_2$ into $\ell_1$.

Harrow, Montanaro and Short~\cite{HMS} have shown the impossiblity of
obtaining a Johnson Lindenstrauss style dimenionality reduction for
mixed quantum states under the Frobenius norm (aka Schatten $2$-norm).
The impossibility proof uses a feature similar to the observation above
that the block name is essentially uniform.

The Johnson-Lindenstrauss lemma has found several applications in
quantum algorithms and protocols too e.g. quantum 
fingerprinting~\cite{BCWdW,GKdW}, non-local games~\cite{CHTW} etc.

\section{Preliminaries}
\label{sec:preliminaries}
Let $\elltwo{v} := \sqrt{\sum_{i=1}^d |v_i|^2}$ denote the 
$\ell_2$-norm of a vector $v \in \C^{d}$.
Similarly, for a matrix $M \in \C^{d_1} \times \C^{d_2}$, let 
$\elltwo{M}$ denote the Frobenius norm or Hilbert-Schmidt norm or the 
Schatten $2$-norm which
is nothing but the $\ell_2$-norm of the $(d_1 d_2)$-tuple obtained by
stretching $M$ to a long vector.

\subsection{Unitary $t$-designs}
We recall the definition of a tensor product expander (TPE) first
defined by Harrow and Hastings~\cite{HH}.
\begin{definition}[{\bf Tensor product expander}]
A $(d, s, \lambda, t)$-tensor product expander (TPE) is a set of 
$d \times d$ unitaries $\{V_i\}_{i=1}^s$ such that
\[
\elltwo{
\E_V^{\mbox{Design}} [V^{\otimes t} M (V^\dagger)^{\otimes t}] -
\E_{U}^{\mbox{Haar}} [U^{\otimes t} M (U^\dagger)^{\otimes t}]
} \leq
\lambda \elltwo{M},
\]
for all linear operators 
$M: (\C^d)^{\otimes t} \rightarrow (\C^d)^{\otimes t}$.
The notation 
\[
\E_V^{\mbox{Design}} [V^{\otimes t} M (V^\dagger)^{\otimes t}] 
:=
s^{-1} \sum_{i=1}^s 
V_i^{\otimes t} M (V_i^\dagger)^{\otimes t} 
\]
denotes the expectation under the choice of a uniformly random unitary
from the design.
The notation $\E_U^{\mbox{Haar}}[\cdot]$ denotes the expectation under 
the choice of a unitary $U$ picked from the Haar measure.
\end{definition}

We now recall the definition of an approximate unitary $t$-design 
according to Low~\cite{Low}.
\begin{definition}[{\bf Unitary $t$-design}]
Consider $d^2$ formal variables $\{u_{ij}\}_{i,j=1}^d$. A monomial $M$
in these formal variables is said to be {\em balanced of degree $t$} if
it is a product of exactly $t$ of the formal variables and exactly 
$t$ of complex conjugates of the formal variables (the sets
of unconjugated and conjugates variables bear no relation amongst them).
For a $d \times d$ unitary matrix $V$, let $M(V)$ denote the value
of the monomial $M$ obtained by evaluating it at the entries $V_{ij}$ of
$V$. A balanced polynomial of degree $t$ is a linear combination of
balanced monomials of degree $t$.

A unitary $(d, s, \alpha, t)$-design is a set of 
$d \times d$ unitaries $\{V_i\}_{i=1}^s$ such that
\[
\left|
\E_{V}^{\mbox{Design}} [M(V)] -
\E_{U}^{\mbox{Haar}} [M(U)]
\right| \leq
\frac{\alpha}{d^t},
\]
for all balanced monomials $M$ of degree $t$.
\end{definition}

Sequentially iterating a TPE twice means applying the superoperator 
corresponding to the TPE twice in succession. This gives us a 
$(d, s^2, \lambda^2, t)$-TPE where the $s^2$ unitaries are of the form
$V_i V_j$, $1 \leq i, j \leq s$. 
It is now easy to see that a $(d, s, \lambda, t)$-TPE can be 
sequentially iterated 
$O(\frac{t \log d + \log \alpha^{-1}}{\log \lambda^{-1}})$ times to
obtain an $\alpha$-approximate unitary $t$-design. For a 
proof of this statement, we refer to \cite[Lemma~2.7]{Low}.

\subsection{Johnson-Lindenstrauss lemma}
\label{subsec:jl}
We first recall the following well known concentration property 
of the sum of squares of iid Gaussians (aka the chi-square distribution), 
which can be easily proved Chernoff
style using the exponential moment generating function.
\begin{fact}
\label{fact:chisquare}
Let $G_1, \ldots, G_n$ be independent Gaussians of mean $0$ and variance
$1$ each. Let $\epsilon > 0$. Then
\[
\Pr \left[
\sum_{i=1}^{n} G_i^2 \not \in (1 \pm \epsilon) n
\right] \leq
2 (e^{-\epsilon/2} \sqrt{1+\epsilon})^n.
\]
For $\epsilon \leq 1$, we can further upper bound the right hand side by
$
2 (e^{-\epsilon/2} \sqrt{1+\epsilon})^n \leq
2 e^{-2^{-3} \epsilon^2 n}.
$
\end{fact}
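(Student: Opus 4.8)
The plan is to prove the two tails separately by the standard Chernoff / exponential-moment method and then combine them with a union bound. The key ingredient is the moment generating function of a single squared standard Gaussian: for any $\theta < 1/2$ one has $\E[e^{\theta G_i^2}] = (1-2\theta)^{-1/2}$, so by independence $\E[e^{\theta \sum_i G_i^2}] = (1-2\theta)^{-n/2}$.

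For the upper tail I would apply Markov's inequality to $e^{\theta \sum_i G_i^2}$ with $\theta > 0$, obtaining $\Pr[\sum_i G_i^2 \geq (1+\epsilon)n] \leq e^{-\theta(1+\epsilon)n}(1-2\theta)^{-n/2}$, and then optimise over $\theta$. The minimiser is $\theta = \epsilon/(2(1+\epsilon))$, which after substitution yields exactly $(e^{-\epsilon/2}\sqrt{1+\epsilon})^n$. For the lower tail I would run the identical argument with a negative exponent, taking $\theta = -\epsilon/(2(1-\epsilon))$, which gives $\Pr[\sum_i G_i^2 \leq (1-\epsilon)n] \leq (e^{\epsilon/2}\sqrt{1-\epsilon})^n$.

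The one point that needs care is to show that the lower-tail expression is dominated by the upper-tail expression, so that both tails can be bounded by the same quantity and the union bound produces the clean factor of $2$. Concretely, I would verify $e^{\epsilon/2}\sqrt{1-\epsilon} \leq e^{-\epsilon/2}\sqrt{1+\epsilon}$, equivalently $e^{2\epsilon}(1-\epsilon) \leq 1+\epsilon$, by a short calculus argument: the difference $h(\epsilon) = (1+\epsilon) - e^{2\epsilon}(1-\epsilon)$ satisfies $h(0)=h'(0)=0$ and $h''(\epsilon) = 4\epsilon e^{2\epsilon} > 0$, hence $h(\epsilon) \geq 0$ for all $\epsilon > 0$. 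Adding the two tail bounds then yields the stated estimate $2(e^{-\epsilon/2}\sqrt{1+\epsilon})^n$.

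Finally, for the simplified form valid when $\epsilon \leq 1$, I would reduce everything to the elementary inequality $\log(1+\epsilon) \leq \epsilon - \epsilon^2/4$, since taking logarithms shows this is equivalent to $e^{-\epsilon/2}\sqrt{1+\epsilon} \leq e^{-\epsilon^2/8}$ and hence to the claimed bound $2 e^{-2^{-3}\epsilon^2 n}$. This inequality I would again establish by calculus: the function $\phi(\epsilon) = \epsilon - \epsilon^2/4 - \log(1+\epsilon)$ has $\phi(0)=0$ and derivative $\phi'(\epsilon) = 1 - \epsilon/2 - (1+\epsilon)^{-1}$, which vanishes at both $\epsilon=0$ and $\epsilon=1$ and is unimodal on $[0,1]$ (its second derivative changes sign exactly once, at $\epsilon=\sqrt{2}-1$), so $\phi' \geq 0$ and therefore $\phi \geq 0$ throughout $[0,1]$. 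I expect no genuine obstacle in this proof; it is routine Chernoff bookkeeping, and the only subtlety lies in the two elementary comparison inequalities above, each of which falls to a two-line monotonicity argument.
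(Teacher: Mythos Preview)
Your proposal is correct and follows exactly the approach the paper indicates: the paper does not spell out a proof of this fact but merely states that it ``can be easily proved Chernoff style using the exponential moment generating function,'' which is precisely what you do. Your treatment of the optimisation, the comparison of the two tail bounds, and the elementary inequality $\log(1+\epsilon) \leq \epsilon - \epsilon^2/4$ on $[0,1]$ are all sound and fill in the details the paper omits.
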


We now state the main technical lemma behind the proof of the Johnson
Lindenstrauss lemma which gives a concentration result for the length
of the projection of a unit vector onto a Haar random subspace.
This lemma
can be proved by appealing to Levy's lemma about concentration
of a Lipschitz function defined on the unitary group around its mean,
combined with Fact~\ref{fact:chisquare} above.
\begin{fact}
\label{fact:conc}
Let $v$ be a fixed vector in $\C^{d_1}$, $\elltwo{v} = 1$. Let $d_2 < d_1$.
Let $U$ be a Haar random $d_1 \times d_1$ unitary.
Let $\Pi_i$, $1 \leq i \leq \frac{d_1}{d_2}$ be the orthogonal
projection in $\C^{d_1}$ onto the $i$th block of $d_2$ coordinates.
Let $\epsilon > 0$. Then for any fixed $i$,
\[
\Pr_U \left[
\elltwo{\Pi_i U v} \not \in (1 \pm \epsilon) \sqrt{\frac{d_2}{d_1}}
\right]
\leq
4 \exp(-2^{-4} \epsilon^2 d_2).
\]
\end{fact}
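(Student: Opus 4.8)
The plan is to avoid the unitary group altogether and make the random subspace explicit through independent Gaussians, reading off the concentration directly from Fact~\ref{fact:chisquare}. This is a hands-on substitute for the Levy's lemma route: instead of invoking concentration of the $1$-Lipschitz map $U \mapsto \elltwo{\Pi_i U v}$ around its mean and then separately estimating that mean, I would exhibit $\elltwo{\Pi_i U v}^2$ as a ratio of two chi-square variables and control both with Fact~\ref{fact:chisquare} alone. The reward is that the constants drop out and match the claimed bound exactly.

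First I would reduce to a Haar random unit vector. Since the Haar measure is left-invariant and $\elltwo{v} = 1$, the vector $w := U v$ is uniform on the unit sphere of $\C^{d_1}$ with distribution independent of $v$, and $\elltwo{\Pi_i U v} = \elltwo{\Pi_i w}$. I would then realise $w$ as a normalised complex Gaussian vector: taking $g = (g_1, \ldots, g_{d_1})$ with $g_j = a_j + \mathrm{i}\, b_j$ for independent real $N(0,1)$ variables $a_j, b_j$, the vector $g / \elltwo{g}$ is uniform on the sphere, so we may set $w = g / \elltwo{g}$. Writing $B$ for the $d_2$ coordinates of block $i$, this yields the exact identity
\[
\elltwo{\Pi_i w}^2 = \frac{\sum_{j \in B} |g_j|^2}{\sum_{j=1}^{d_1} |g_j|^2} =: \frac{N}{D},
\]
where $N = \sum_{j \in B} (a_j^2 + b_j^2)$ is a sum of exactly $2 d_2$ independent squared standard Gaussians and $D = \sum_{j=1}^{d_1} (a_j^2 + b_j^2)$ is a sum of $2 d_1$ of them. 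Since $B$ is an arbitrary block, this distribution is the same for every $i$, matching the ``for any fixed $i$'' in the statement.

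The core is then two applications of Fact~\ref{fact:chisquare} with deviation parameter $\delta := \epsilon / 2$. Applied to the $2 d_2$ Gaussians composing $N$ it gives $\Pr[N \notin (1 \pm \delta)\, 2 d_2] \leq 2 \exp(-2^{-3} \delta^2 \cdot 2 d_2) = 2 \exp(-2^{-4} \epsilon^2 d_2)$, and applied to the $2 d_1$ Gaussians composing $D$ it gives the same bound with $d_1$ in place of $d_2$, which is only smaller since $d_1 > d_2$. On the intersection of the two good events,
\[
\frac{N}{D} \in \left[ \frac{1 - \delta}{1 + \delta}, \frac{1 + \delta}{1 - \delta} \right] \frac{d_2}{d_1},
\]
and a short check shows that for $0 < \epsilon \leq 1$ and $\delta = \epsilon/2$ this interval sits inside $[(1 - \epsilon)^2, (1 + \epsilon)^2]\, d_2/d_1$, i.e. $\elltwo{\Pi_i w} \in (1 \pm \epsilon) \sqrt{d_2 / d_1}$. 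A union bound over the two failure events gives the advertised $4 \exp(-2^{-4} \epsilon^2 d_2)$, the factor $4$ being precisely the two two-sided chi-square deviations, one for $N$ and one for $D$.

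The step I expect to demand the most care is the elementary containment $[\frac{1-\delta}{1+\delta}, \frac{1+\delta}{1-\delta}] \subseteq [(1-\epsilon)^2, (1+\epsilon)^2]$ at $\delta = \epsilon/2$, since this is exactly where the constant $2^{-4}$ is pinned down and a looser choice of $\delta$ would weaken it. Two smaller points also need a word. First, $N$ and $D$ are far from independent (indeed $N \leq D$ always), but the argument only ever union-bounds their individual deviations, so independence is never used. Second, the simplified exponential form of Fact~\ref{fact:chisquare} requires deviation at most $1$, so the clean computation is for $\epsilon \leq 1$; when $\epsilon > 1$ the lower endpoint $(1 - \epsilon)\sqrt{d_2/d_1}$ is negative and hence vacuous, leaving only the upper tail of $N/D$, which the same chi-square estimate controls a fortiori.
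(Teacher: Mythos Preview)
Your argument for $0 < \epsilon \leq 1$ is not an alternative to the paper's proof---it \emph{is} the paper's proof. The paper too realises $Uv$ as a normalised complex Gaussian $d_1$-tuple, writes $\elltwo{\Pi_i Uv}^2$ as the ratio of a sum of $2d_2$ squared standard Gaussians to a sum of $2d_1$ of them, applies Fact~\ref{fact:chisquare} twice with parameter $\epsilon/2$, and union-bounds to get $2\exp(-2^{-4}\epsilon^2 d_2) + 2\exp(-2^{-4}\epsilon^2 d_1) \leq 4\exp(-2^{-4}\epsilon^2 d_2)$. Levy's lemma is not used in this regime.

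The genuine divergence is at $\epsilon > 1$, and here your ``the same chi-square estimate controls it a fortiori'' is a gap. Monotonicity from the $\epsilon = 1$ case yields only $4\exp(-2^{-4} d_2)$, not the claimed $4\exp(-2^{-4}\epsilon^2 d_2)$, and the simplified exponential form of Fact~\ref{fact:chisquare} is unavailable for large deviations. This regime is actually used downstream: the proof of Proposition~\ref{prop:qjl} invokes Fact~\ref{fact:conc} at every integer $\epsilon \geq 2$ in its sum over $i$. The paper handles it by switching to Levy's lemma on the $1$-Lipschitz map $U \mapsto \elltwo{\Pi_i U v}$, together with $\E[\elltwo{\Pi_i U v}] \leq \sqrt{d_2/d_1}$, to get $2\exp(-2^{-2}\epsilon^2 d_2)$. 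A chi-square-only route \emph{can} be pushed through---use the unsimplified bound of Fact~\ref{fact:chisquare} on $N$ at deviation $\eta$ with $1+\eta = (1+\epsilon)^2/2$, keep a fixed constant deviation on $D$, and note the event is vacuous once $(1+\epsilon)^2 d_2 > d_1$ so that the $D$-tail $\exp(-c\,d_1)$ is always dominated by $\exp(-c\,\epsilon^2 d_2)$---but that is an argument to be written, not an ``a fortiori''.
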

\begin{proof}
By symmetry of the Haar measure, the desired probability is nothing but
the probability that a random unit vector in $\C^{d_1}$ does
not have length $(1 \pm \epsilon) \sqrt{\frac{d_2}{d_1}}$ when
projected onto the first $d_2$ coordinates. Since a Haar random unit
vector $v \in \C^{d_1}$ can be generated by taking $2 d_1$ independent
real Gaussian random variables $\{G_i\}_{i=1}^{2 d_1}$ with 
mean $0$ and variance $1$, forming
a complex $d_1$-tuple out of them and then dividing by the $\ell_2$-norm
of the tuple, we can see that for $0 < \epsilon \leq 1$, 
\begin{eqnarray*}
\lefteqn{
\Pr_U \left[
\elltwo{\Pi_i U v} \not \in (1 \pm \epsilon) \sqrt{\frac{d_2}{d_1}}
\right] 
} \\
& \leq &
\Pr \left[
\sum_{i=1}^{2 d_2} G_i^2 \not \in 2 (1 \pm \frac{\epsilon}{2}) d_2 
\right] +
\Pr \left[
\sum_{i=1}^{2 d_1} G_i^2 \not \in 2 (1 \pm \frac{\epsilon}{2}) d_1 
\right] \\ 
& \leq &
2 \exp(-2^{-4} \epsilon^2 d_2) + 
2 \exp(-2^{-4} \epsilon^2 d_1) 
\;\leq\;
4 \exp(-2^{-4} \epsilon^2 d_2),
\end{eqnarray*}
where we used Fact~\ref{fact:chisquare} in the second to last inequality.

For $\epsilon > 1$, only the upper tail is relevant i.e.
\[
\Pr_U \left[
\elltwo{\Pi_i U v} \not \in (1 \pm \epsilon) \sqrt{\frac{d_2}{d_1}}
\right] = 
\Pr_U \left[
\elltwo{\Pi_i U v} > (1 + \epsilon) \sqrt{\frac{d_2}{d_1}}
\right].
\]
Define a real valued function
$f(U) := \elltwo{\Pi_i U v}$.
Then $f(U)$ is $1$-Lipschitz with respect to
the Frobenius norm on $d_1 \times d_1$ unitary matrices. 
By Levy's lemma~\cite[Corollary~4.4.28]{AGZ},
\[
\Pr[|f(U) - \E[f]| > \delta] \leq 2 \exp(-2^{-2} d_1 \delta^2).
\]
where the probability and expectation are taken over the Haar measure 
on $d_1 \times d_1$ unitaries. 
Now observe by symmetry that
$\E[f(U)^2] = \frac{d_2}{d_1}$. By convexity of the square function,
$\E[f(U)] < \sqrt{\frac{d_2}{d_1}}$. Thus,
\[
\Pr[f(U) > (1 + \epsilon) \sqrt{\frac{d_2}{d_1}}] \leq 
2 \exp(-2^{-2} \epsilon^2 d_2).
\]

This covers both the cases of $\epsilon \leq 1$ and 
$\epsilon > 1$ and so completes the proof.
\end{proof}

The Johnson-Lindenstrauss lemma now follows easily from the above fact.
\begin{fact}[{\bf Johnson Lindenstrauss lemma}]
Consider a set of $n$ vectors $\{v_i\}_{i=1}^n \in \C^{d_1}$. Let
$\epsilon > 0$. Then there is a linear transformation 
$T: \C^{d_1} \rightarrow \C^{d_2}$ where $d_2 = O(\epsilon^{-2} \log n)$
such that $\elltwo{T v_i} \in (1 \pm \epsilon) \elltwo{v_i}$ for all 
$i \in [n]$.
\end{fact}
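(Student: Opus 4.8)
The plan is to establish this via the probabilistic method, instantiating the transformation $T$ as a Haar random unitary followed by projection onto a single fixed block and a rescaling, and then combining Fact~\ref{fact:conc} with a union bound. Before doing so, I would reduce to the case of unit vectors. Since the desired conclusion $\elltwo{T v_i} \in (1 \pm \epsilon) \elltwo{v_i}$ is invariant under individually rescaling each $v_i$ (because $T$ is linear and the $\ell_2$-norm is homogeneous), it suffices to exhibit a single $T$ that sends each normalized vector $v_i / \elltwo{v_i}$ to a vector of length in $(1 \pm \epsilon)$. I may therefore assume without loss of generality that $\elltwo{v_i} = 1$ for all $i \in [n]$.

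Next I would define the candidate transformation. Let $U$ be a Haar random $d_1 \times d_1$ unitary and set $T := \sqrt{d_1 / d_2}\, \Pi_1 U$, where $\Pi_1$ is the orthogonal projection onto the first block of $d_2$ coordinates, exactly as in Fact~\ref{fact:conc}. For any fixed unit vector $v_i$, Fact~\ref{fact:conc} gives
\[
\Pr_U\left[ \elltwo{\Pi_1 U v_i} \not\in (1 \pm \epsilon)\sqrt{\frac{d_2}{d_1}} \right] \leq 4 \exp(-2^{-4}\epsilon^2 d_2).
\]
Scaling the event inside the probability by the factor $\sqrt{d_1/d_2}$ then shows that $\elltwo{T v_i} \in (1 \pm \epsilon)$ except with probability at most $4 \exp(-2^{-4}\epsilon^2 d_2)$.

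Then I would take a union bound over the $n$ vectors: the probability that $\elltwo{T v_i} \not\in (1 \pm \epsilon)$ for at least one index $i \in [n]$ is at most $4 n \exp(-2^{-4}\epsilon^2 d_2)$. Choosing $d_2 = C \epsilon^{-2} \log n$ for a sufficiently large absolute constant $C$ (so that $4 n \exp(-2^{-4}\epsilon^2 d_2) < 1$) makes this failure probability strictly below $1$. Hence there exists at least one admissible choice of $U$, and therefore a fixed linear map $T$, that simultaneously satisfies $\elltwo{T v_i} \in (1 \pm \epsilon)$ for all $i$; undoing the normalization from the first step recovers the statement for the original vectors.

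I do not anticipate any genuine obstacle in this argument, as it is a routine application of the probabilistic method once Fact~\ref{fact:conc} is available. The only points that warrant a little care are the reduction to unit vectors (needed so that a single map $T$ can handle vectors of differing norms at once) and the bookkeeping of the constant $C$ in the union bound, which must be large enough to drive the expected number of badly distorted vectors below one while keeping $d_2 = O(\epsilon^{-2} \log n)$.
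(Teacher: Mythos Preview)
Your proposal is correct and follows exactly the paper's approach: define $T(v) := \sqrt{d_1/d_2}\,\Pi_1 U v$ for a Haar random $U$, invoke Fact~\ref{fact:conc} on each vector, and finish with a union bound. The only additions you make are the explicit reduction to unit vectors and spelling out the union bound calculation, both of which the paper leaves implicit.
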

\begin{proof}
Choose a Haar random $d_1 \times d_1$ unitary $U$. For $v \in \C^{d_1}$,
define $T(v) := \sqrt{\frac{d_1}{d_2}} \Pi_1 U v$. Fact~\ref{fact:conc}
and a union bound on probability now completes the proof.
\end{proof}

\section{An efficient quantum Johnson Lindenstrauss transform}
\label{sec:qjl}
In this section, we show that choosing a $d_1 \times d_1$ unitary 
uniformly at random from an approximate unitary $t$-design, for
$t = \Theta(d_2)$, achieves similar performance as the Haar random unitary
in Fact~\ref{fact:conc}. We prove this
by using the method of Low~\cite{Low}, who in turn adapted
the classical $t$-moment method of Bellare and Rompel~\cite{BR} to the
quantum setting. 
It is also possible to give a 
more direct proof by truncating the exponential
moment generating function, used to show concentration for sums of squares
of independent Gaussians in Fact~\ref{fact:chisquare}, at an
appropriately chosen $\Theta(d_2)$th
power and proving that the 
truncation does not affect the value of the generating function by much.
However the value of $t$ obtained by this method is larger than the value
obtained by using Low's method. Hence we will only give the proof using
Low's method. The proof is deferred to Section~\ref{sec:qjlproof}.
\begin{proposition}
\label{prop:qjl}
Let $v$ be a fixed vector in $\C^{d_1}$, $\elltwo{v} = 1$. 
Let $d_2 < d_1$.
Let $U$ be a unitary chosen
uniformly at random from a $(d_1, s, \lambda, t)$-TPE, for
$t = 2^{-9} \epsilon^2 d_2$,  
$
\lambda = 
(\frac{4\epsilon^2 d_2}{d_1^2})^{t/2} e^{-t/2},
$
and 
$
\log s = O(d_2 \log d_1).
$
Let $\Pi_i$, $1 \leq i \leq \frac{d_1}{d_2}$ be the orthogonal
projection in $\C^{d_1}$ onto the $i$th block of $d_2$ coordinates.
Let $0 < \epsilon < 1$. Then for any fixed $i$,
\[
\Pr_U \left[
\elltwo{\Pi_i U v} \not \in (1 \pm \epsilon) \sqrt{\frac{d_2}{d_1}}
\right]
\leq
2^6 \exp(-2^{-10} \epsilon^2 d_2).
\]
\end{proposition}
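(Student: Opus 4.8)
The plan is to run the $t$-th moment method of Bellare--Rompel as adapted to designs by Low. Write $X := \elltwo{\Pi_i U v}^2 = \bra{v}U^\dagger \Pi_i U\ket{v}$; by unitary invariance of the Haar measure its Haar mean is $\mu := \E_U^{\mbox{Haar}}[X] = d_2/d_1$, the centre of the target interval. The failure event $\elltwo{\Pi_i U v}\notin(1\pm\epsilon)\sqrt{d_2/d_1}$ is exactly $\sqrt{X}\notin(1\pm\epsilon)\sqrt\mu$, and squaring shows it forces $|X-\mu| > (2\epsilon-\epsilon^2)\mu$, a deviation of order $2\epsilon\mu$. Thus for an even integer $t$ I would bound the probability by Markov applied to the $t$-th central moment, $\Pr_V[|X-\mu|>\delta]\le\delta^{-t}\,\E_V^{\mbox{Design}}[(X-\mu)^t]$, with $\delta$ of order $2\epsilon\mu$.

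The point that makes the design hypothesis bite is that this central moment is itself a balanced degree-$t$ quantity. Setting $\tilde\Pi := \Pi_i - \mu I$ we have $X-\mu = \bra{v}U^\dagger\tilde\Pi U\ket{v}$, so $(X-\mu)^t = \bra{v}^{\otimes t}(U^\dagger)^{\otimes t}\,\tilde\Pi^{\otimes t}\,U^{\otimes t}\ket{v}^{\otimes t}$. Taking expectations and applying the TPE inequality with $M=\tilde\Pi^{\otimes t}$---legitimate because $\ket{v}^{\otimes t}$ is a unit vector, so the Frobenius-norm error passes to the quadratic form---gives $\bigl|\E_V^{\mbox{Design}}[(X-\mu)^t]-\E_U^{\mbox{Haar}}[(X-\mu)^t]\bigr|\le\lambda\,\elltwo{\tilde\Pi}^t$. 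A one-line eigenvalue count ($\tilde\Pi$ has eigenvalue $1-\mu$ with multiplicity $d_2$ and $-\mu$ with multiplicity $d_1-d_2$) yields $\elltwo{\tilde\Pi}^2 = d_2(1-d_2/d_1)\le d_2$.

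I would then bound, after dividing by $\delta^t$, the two resulting pieces separately. For the Haar piece, integrating the chi-square tail of Fact~\ref{fact:conc} (which controls $\sqrt X$, hence $X$) gives $\E_U^{\mbox{Haar}}[(X-\mu)^t]\lesssim \mu^t(ct/d_2)^{t/2}$ for an absolute constant $c$; with $t=\Theta(\epsilon^2 d_2)$ and $\delta$ of order $2\epsilon\mu$ this quotient is $\exp(-\Theta(\epsilon^2 d_2))$ and turns out to be the negligible term. The governing piece is the design error: substituting the prescribed $\lambda = (4\epsilon^2 d_2/d_1^2)^{t/2}e^{-t/2}$ gives $\lambda\,\elltwo{\tilde\Pi}^t = (2\epsilon\mu)^t(1-\mu)^{t/2}e^{-t/2}$, so dividing by $\delta^t\approx(2\epsilon\mu)^t$ collapses it to $(1-\mu)^{t/2}e^{-t/2}\le e^{-t/2}$. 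With $t=2^{-9}\epsilon^2 d_2$ this is $\exp(-2^{-10}\epsilon^2 d_2)$, precisely the claimed exponent, and the $O(1)$ prefactors from both tails are collected into $2^6$. The existence of a TPE with these parameters and $\log s = O(d_2\log d_1)$ follows by iterating a constant-gap TPE, using the iteration-count estimate recalled after the TPE definition.

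The step I expect to be the real obstacle is the comparison of these two pieces. In the usual design-concentration folklore one shows the design error is dominated by the Haar moment; here the opposite holds---the design error is exponentially larger than the Haar central moment---so the argument must instead pit the design error directly against the deviation threshold. This is exactly why $\lambda$ is forced to be exponentially small in $t$ and why its scale must be matched to $2\epsilon\mu$ (the source of the factor $4$ inside $\lambda$, coming from the quadratic relation between deviations of $\sqrt X$ and of $X$). The delicate case is the lower tail, where the failure event only guarantees $|X-\mu|>(2\epsilon-\epsilon^2)\mu$ rather than the full $2\epsilon\mu$; the leftover factor $(2/(2-\epsilon))^t$ is harmless for $\epsilon$ bounded away from $1$ but must be accounted for as $\epsilon\to 1$, for instance by separating the two tails (the upper tail enjoys threshold $(2\epsilon+\epsilon^2)\mu\ge 2\epsilon\mu$ with room to spare) and handling the small-ball lower tail for large $\epsilon$ by a direct estimate.
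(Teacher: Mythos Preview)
Your proposal is correct and follows essentially the same approach as the paper: Low's $t$-th moment method applied to the centred variable $g(U)=\elltwo{\Pi_i Uv}^2-d_2/d_1$, with the Haar central moment bounded by integrating the tail of Fact~\ref{fact:conc}, the design error bounded via the TPE hypothesis, and Markov's inequality with $t=\Theta(\epsilon^2 d_2)$ to conclude. The only cosmetic difference is that you centre the projector, writing $g(U)=\bra{v}U^\dagger(\Pi_i-\mu I)U\ket{v}$, whereas the paper centres the state, writing $g(U)=\Tr[\Pi_i U(\ketbra{v}-\one/d_1)U^\dagger]$; both representations give the same TPE error bound $\lambda\,d_2^{t/2}$ after Cauchy--Schwarz, and your explicit discussion of the lower-tail threshold $(2\epsilon-\epsilon^2)\mu$ is in fact more careful than the paper's treatment.
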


We can now define the quantum Johnson Lindenstrauss transform and
prove its main property.
\begin{theorem}
\label{thm:qjl}
Consider a set of $n$ pure states $\{\ket{v_i}\}_{i=1}^n \in \C^{d_1}$,
whose classical descriptions are known a priori. 
Let $0 < \epsilon, \delta < 1/4$. Let 
$d_2 = O(\epsilon^{-2} \log \frac{n d_1}{\delta})$.
Let $U$ be a $d_1 \times d_1$ unitary chosen
uniformly at random from a $(d_1, s, \lambda, t)$-TPE, for
$t = 2^{-9} \epsilon^2 d_2$,  
$
\lambda = 
(\frac{4\epsilon^2 d_2}{d_1^2})^{t/2} e^{-t/2},
$
and 
$
\log s = O(d_2 \log d_1).
$
Suppose we apply $U$ to the given pure state and
measure the name of a block of $d_2$ coordinates i.e. we project onto
the range of $\Pi_j$ for some $j$. 
Let $\ket{v_i(j, U)}$ be the normalised state resulting from $\ket{v_i}$
if the name of the measured block is $j$ i.e.
$
\ket{v_i(j, U)} = \frac{\Pi_j U \ket{v_i}}{\elltwo{\Pi_j U \ket{v_i}}}.
$
Then, with probability at least
$1 - \delta$ over the choice of $U$  
\[
\begin{array}{r c l l}
\elltwo{\Pi_j U \ket{v_i}} 
& \in & (1 \pm \epsilon) \sqrt{\frac{d_2}{d_1}} 
&
~~~~~
\forall i \in [n], j \in [\frac{d_1}{d_2}], \\
\braket{v_i(j, U)}{v_{i'}(j, U)} 
& \in & \braket{v_i}{v_{i'}} \pm 8 \epsilon
&
~~~~~
\forall i, i' \in [n], j \in [\frac{d_1}{d_2}].
\end{array}
\]
\end{theorem}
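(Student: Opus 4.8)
The plan is to deduce both conclusions from the single-vector concentration bound of Proposition~\ref{prop:qjl}, applied not only to each $\ket{v_i}$ but also to a family of normalised pairwise combinations, and then to combine these estimates through the complex polarisation identity. Throughout I would tacitly invoke Proposition~\ref{prop:qjl} with $\epsilon$ replaced by a small constant multiple of itself; since the target dimension $d_2 = O(\epsilon^{-2}\log\frac{n d_1}{\delta})$ need only satisfy a lower bound, this rescaling is absorbed into the hidden constant and ultimately produces the stated constant $8$ in the second conclusion.

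First I would fix the collection of unit vectors to which the concentration bound is applied. For every $i \in [n]$ this includes $\ket{v_i}$ itself, and for every ordered pair $(i,i')$ and every $k \in \{0,1,2,3\}$ with $\ket{v_{i'}} + i^k \ket{v_i} \neq 0$ it includes the normalised combination $\ket{w_{i,i',k}} := \frac{\ket{v_{i'}} + i^k\ket{v_i}}{\elltwo{\ket{v_{i'}} + i^k\ket{v_i}}}$; degenerate combinations contribute nothing and may be discarded. Applying Proposition~\ref{prop:qjl} to each such unit vector and to each block $j \in [\frac{d_1}{d_2}]$ and taking a union bound over the resulting $O(n^2 d_1)$ events, the choice $d_2 = O(\epsilon^{-2}\log\frac{n d_1}{\delta})$ guarantees that with probability at least $1-\delta$ the estimate $\elltwo{\Pi_j U \ket{w}} \in (1\pm\epsilon)\sqrt{d_2/d_1}$ holds simultaneously for every vector $\ket{w}$ in the collection and every block $j$. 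The first conclusion is then immediate, being exactly this statement for $\ket{w} = \ket{v_i}$.

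For the second conclusion I would condition on this high-probability event and write $A := \Pi_j U$. The complex polarisation identity gives $\braket{A v_i}{A v_{i'}} = \frac{1}{4} \sum_{k=0}^3 i^k \elltwo{A(\ket{v_{i'}} + i^k\ket{v_i})}^2$, and the same identity with $A = I$ gives $\braket{v_i}{v_{i'}} = \frac{1}{4}\sum_{k=0}^3 i^k \elltwo{\ket{v_{i'}}+i^k\ket{v_i}}^2$. Since $\elltwo{A(\ket{v_{i'}}+i^k\ket{v_i})}^2 = \elltwo{\ket{v_{i'}}+i^k\ket{v_i}}^2\, \elltwo{A\ket{w_{i,i',k}}}^2$, the concentration bound controls each squared projected norm to within a factor $(1\pm\epsilon)^2$ of $\frac{d_2}{d_1}$ times the corresponding combination length. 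Summing the per-term errors through the polarisation identity and using $\sum_{k=0}^3 \elltwo{\ket{v_{i'}}+i^k\ket{v_i}}^2 = 8$, I obtain $\frac{d_1}{d_2}\braket{A v_i}{A v_{i'}} \in \braket{v_i}{v_{i'}} \pm O(\epsilon)$, i.e. the \emph{unnormalised} overlaps are preserved up to a small additive error.

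Finally I would pass to the normalised overlaps by dividing by $\elltwo{A\ket{v_i}}\,\elltwo{A\ket{v_{i'}}}$, which by the first conclusion lies in $(1\pm\epsilon)^2 \frac{d_2}{d_1}$. Writing the normalised overlap as the ratio of the scaled quantities $p := \frac{d_1}{d_2}\braket{A v_i}{A v_{i'}}$ and $q := \frac{d_1}{d_2}\elltwo{A\ket{v_i}}\elltwo{A\ket{v_{i'}}}$, and using $|\braket{v_i}{v_{i'}}| \leq 1$ together with $\epsilon < 1/4$, a short computation of the form $|p/q - c| = |p - cq|/q$ bounds the deviation by a constant multiple of $\epsilon$; with the constant-factor rescaling of $\epsilon$ noted above this yields $\braket{v_i(j,U)}{v_{i'}(j,U)} \in \braket{v_i}{v_{i'}} \pm 8\epsilon$. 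I expect the main obstacle to be bookkeeping rather than anything conceptual: one must normalise each combination before invoking Proposition~\ref{prop:qjl}, propagate the multiplicative $(1\pm\epsilon)^2$ errors correctly through both the polarisation sum and the final division, and check that the ratio estimate does not degrade when $\braket{v_i}{v_{i'}}$ is near $\pm 1$ or when a combination length is small—the latter being harmless since the associated error term scales with that same length.
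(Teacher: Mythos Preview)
Your proposal is correct and follows essentially the same route as the paper: apply Proposition~\ref{prop:qjl} to each $\ket{v_i}$ and to normalised pairwise combinations, take a union bound over all $O(n^2)\cdot\frac{d_1}{d_2}$ events, and recover the inner products via polarisation. The only cosmetic differences are that the paper uses the two-term real/imaginary polarisation (via $\ket{v_i}-\ket{v_{i'}}$ and $\ket{v_i}-\sqrt{-1}\,\ket{v_{i'}}$) rather than your four-term version, and it tracks the error directly on the normalised states rather than first bounding the unnormalised overlap and then dividing; neither change is substantive.
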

\begin{proof}
From Proposition~\ref{prop:qjl} and 
the union bound on probability, we see that 
\[
\begin{array}{r c l l}
\elltwo{\Pi_j U \ket{v_i}} 
& \in & (1 \pm \epsilon) \sqrt{\frac{d_2}{d_1}} 
&
~~~~~
\forall i \in [n], j \in [\frac{d_1}{d_2}], \\
\elltwo{\Pi_j U \ket{v_i} - \Pi_j U \ket{v_{i'}}} 
& \in & (1 \pm \epsilon) \elltwo{v_i - v_{i'}} \sqrt{\frac{d_2}{d_1}}
&
~~~~~
\forall i, i' \in [n], j \in [\frac{d_1}{d_2}], \\
\elltwo{\Pi_j U \ket{v_i} - \sqrt{-1} \, \Pi_j U \ket{v_{i'}}} 
& \in & (1 \pm \epsilon) \elltwo{v_i - \sqrt{-1} \, v_{i'}} 
                         \sqrt{\frac{d_2}{d_1}}
&
~~~~~
\forall i, i' \in [n], j \in [\frac{d_1}{d_2}],
\end{array}
\]
with probability at lest $1 - \delta$ over the choice of $U$. Using the
above constraints, we get
\begin{eqnarray*}
\lefteqn{\braket{v_i(j, U)}{v_{i'}(j, U)}} \\
& = &
\left(
\frac{1}{2}
\elltwo{
\frac{\Pi_j U \ket{v_i}}{\elltwo{\Pi_j U \ket{v_i}}} -
\frac{\Pi_j U \ket{v_{i'}}}{\elltwo{\Pi_j U \ket{v_{i'}}}}
}^2 - 
1
\right)  \\
&   &
{} -
\sqrt{-1} \,
\left(
\frac{1}{2}
\elltwo{
\frac{\Pi_j U \ket{v_i}}{\elltwo{\Pi_j U \ket{v_i}}} -
\sqrt{-1}\,\frac{\Pi_j U \ket{v_{i'}}}{\elltwo{\Pi_j U \ket{v_{i'}}}}
}^2 - 
1
\right) \\
& \in &
\left(
\frac{1}{2}
\frac{d_1}{d_2}
\left(
\elltwo{
\Pi_j U \ket{v_i} -
\Pi_j U \ket{v_{i'}}
} \pm 
(4 \epsilon / 3)
(\elltwo{\Pi_j U \ket{v_i}} + \elltwo{\Pi_j U \ket{v_{i'}}})
\right)^2 - 1
\right)  \\
&   &
{} -
\sqrt{-1} \,
\left(
\frac{1}{2}
\frac{d_1}{d_2}
\left(
\elltwo{
\Pi_j U \ket{v_i}
\sqrt{-1}\,\Pi_j U \ket{v_{i'}}
} \pm
(4 \epsilon / 3)
(\elltwo{\Pi_j U \ket{v_i}} + \elltwo{\Pi_j U \ket{v_{i'}}})
\right)^2 - 1
\right) \\
& \in &
\left(
\frac{1}{2}
\frac{d_1}{d_2}
\elltwo{
\Pi_j U \ket{v_i} -
\Pi_j U \ket{v_{i'}}
}^2 - 
1
\right)  \\
&   &
{} -
\sqrt{-1} \,
\left(
\frac{1}{2}
\frac{d_1}{d_2}
\elltwo{
\Pi_j U \ket{v_i}
\sqrt{-1}\,\Pi_j U \ket{v_{i'}}
}^2 - 
1
\right) 
\pm 5 \epsilon \\
& \in &
\left(
\frac{1}{2} \elltwo{\ket{v_i} - \ket{v_{i'}}}^2 - 1
\right)  -
\sqrt{-1} \,
\left(
\frac{1}{2} \elltwo{\ket{v_i} - \sqrt{-1}\,\ket{v_{i'}}}^2 - 1
\right) 
\pm 8 \epsilon \\
& \in &
\braket{v_i}{v_{i'}} \pm 8\epsilon.
\end{eqnarray*}
This completes the proof.
\end{proof}

\section{A toy application}
\label{sec:application}
In this section, we will see a toy application of our quantum Johnson
Lindenstrauss transform to protocols for {\em private information 
retrieval}. In this problem there are two parties, Alice and Bob. 
Alice is given a subset
$S \subseteq [m]$, of size $|S| \leq n$. We work in the regime where
$n$ is very small compared to $m$ viz. $n \ll \frac{\log m}{\log \log m}$.
Bob is given an element $x \in [m]$ and he wants to whether $x$ 
lies in $S$ or not. For this purpose, Bob and Alice follow a two
message communication protocol where Bob first sends a message to Alice,
Alice responds and then Bob makes his conclusion whether $x$ lies in $S$
or not. Bob's conclusion should be correct with probability at least $3/4$.
The privacy requirement is that Bob's message should reveal very littel
information about $x$.

Ideally, we would like the messages to be short and the
computing resources used by Alice and Bob to be polynomial
in $n$ and $\log m$. Is this possible? Yes! There is always the trivial
protocol where Bob says nothing and Alice sends Bob the entire subset
$S$ using $O(n \log m)$ bits. The trivial protocol guarantees perfect
privacy for Bob. 

We now ask if there is a protocol guaranteeing at least approximate privacy
for Bob where Alice communications significantly less. Indeed, when 
$n \ll m$ there is such a protocol based on the following fact proved by
Buhrman, Miltersen, Radhakrishnan and Venkatesh~\cite{BMRV}. 
\begin{fact}
\label{fact:setsystem}
There exists a collection $\{T_1, \ldots, T_m\}$ of subsets of
$[n \log m]$, $|T_i| = O(\log m)$, and for every subset $S \subseteq [m]$,
$|S| \leq n$, a scheme of colouring the set $[n \log m]$ with zero or one,
such that for $x \in S$, at least $0.9$ fraction of elements of $T_x$ are
coloured one, and for $x \not \in S$, at least $0.9$ fraction of 
elements of $T_x$ are coloured zero.
\end{fact}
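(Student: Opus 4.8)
The plan is to prove Fact~\ref{fact:setsystem} by the probabilistic method, and the first move is to commit to an explicit colouring rule \emph{before} analysing the random sets, so that the two-sided requirement collapses to a one-sided property of the set system. Given an admissible $S$, I would colour an element $j \in [n \log m]$ with $1$ exactly when $j \in \bigcup_{y \in S} T_y$, and with $0$ otherwise. Under this rule the $x \in S$ requirement is automatic: since $x \in S$ we have $T_x \subseteq \bigcup_{y \in S} T_y$, so every element of $T_x$ is coloured $1$, which is well above the required $0.9$ fraction. The whole statement therefore reduces to a single ``cover-freeness'' property: for every $S$ with $|S| \le n$ and every $x \notin S$, the union $\bigcup_{y \in S} T_y$ must meet $T_x$ in at most a $0.1$ fraction of its elements.

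Next I would generate the sets at random, including each element of $[n \log m]$ in $T_i$ independently with probability $p = \Theta(1/n)$, so that $\E |T_i| = \Theta(\log m)$ and, by the Chernoff concentration of Fact~\ref{fact:chisquare}, $|T_i| = \Theta(\log m)$ for every $i$ with room to spare. Fix a pair $(S, x)$ with $x \notin S$. A given element of $T_x$ lies in $\bigcup_{y \in S} T_y$ with probability $1 - (1-p)^{|S|} \le np$, and, conditioned on $T_x$, these coverage events are independent across the distinct elements of $T_x$. Hence the number of covered elements is dominated by a binomial $\mathrm{Bin}(|T_x|, np)$, and a Chernoff bound controls the probability that more than a $0.1$ fraction of $T_x$ is covered; choosing the constant in $p$ so that $np$ is a small constant below $0.1$ makes the expected covered fraction small and gives exponential decay in $|T_x|$.

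The decisive and delicate step is the union bound over all bad pairs $(S, x)$, of which there are at most $m^{n} \cdot m = m^{n+1}$. Here is where the real tension lives: the per-pair failure probability from the Chernoff step decays only like $\exp(-\Theta(|T_x|)) = \exp(-\Theta(\log m)) = m^{-\Theta(1)}$, whereas the number of pairs is $m^{\Theta(n)}$, so the naive balance forces $n$ to be small. Shrinking $p$ helps, since it inserts an extra $\log(1/(np)) = \Omega(n)$ factor into the Chernoff exponent, but it simultaneously shrinks $\E|T_i| = p\, n \log m$, and keeping $|T_i|$ bounded below while retaining the factor only buys $n = O(\log\log m)$ from the union-coloring analysis. \emph{This is the main obstacle}, and reaching the full regime $n \ll \log m / \log\log m$ claimed here requires more than the union colouring: I would exploit the two-sided $10\%$ slack by permitting a few elements of $T_x$ to be miscoloured even for $x \in S$, choosing the colouring per $S$ via a Lovász-Local-Lemma or flow/matching argument, or else replace the random sets by an explicit algebraic (Reed--Solomon-type) or expander construction with controlled overlaps. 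Once the set system and colourings are shown to coexist, the remaining estimates are routine Chernoff and union-bound calculations.
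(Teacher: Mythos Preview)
The paper does not prove Fact~\ref{fact:setsystem}; it is quoted as a black-box result from~\cite{BMRV} and used only in the toy application of Section~\ref{sec:application}. So there is no in-paper proof to compare your proposal against.

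On the merits of your sketch: the reduction via the ``union colouring'' to a one-sided cover-freeness condition is a correct sufficient condition, and you have put your finger on exactly the right difficulty---with $|T_x|=\Theta(\log m)$ the per-pair Chernoff failure probability is only $m^{-\Theta(1)}$, while the number of pairs $(S,x)$ is $m^{\Theta(n)}$, so the naive union bound cannot close for growing $n$. That diagnosis is accurate, and it means that the union-colouring-plus-union-bound route, as written, does \emph{not} yield Fact~\ref{fact:setsystem} in the stated parameter range.

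Two small remarks. First, Fact~\ref{fact:chisquare} concerns sums of squared Gaussians, not Bernoulli sums, so it does not literally give the concentration of $|T_i|$; the standard multiplicative Chernoff bound for binomials is what you want there. Second, your proposed fixes---exploiting the two-sided $10\%$ slack and choosing the colouring per $S$ via an expansion/matching argument rather than the crude union colouring, or replacing the random sets by an expander/disperser---are indeed the right direction. The argument in~\cite{BMRV} is genuinely more involved than the first attempt: one needs a bipartite expansion property of the random (left-regular) graph from $[m]$ to $[n\log m]$, and then builds the colouring for each $S$ from that property rather than from cover-freeness alone. So your write-up is an honest outline that correctly isolates the obstacle and points at the remedy, but it is not yet a proof; the missing piece is precisely the expansion-based construction of the per-$S$ colouring that avoids the $m^{\Theta(n)}$ union bound.
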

The above fact suggests the following protocol for private information
retrieval. Bob says nothing. Hence perfect privacy holds for Bob. 
Alice sends $\Theta(n)$ random elements of $[n \log m]$ coloured one. 
Her message length is $O(n \log (n \log m))$ bits. Bob checks if the
intersection of Alice's message with $T_x$ is above a certain constant
If so, he declares that $x \in S$; if not, he declares $x \not \in S$.
A standard Chernoff bound shows that there is a constant gap in the 
probability of Bob declaring $x \in S$ depending on whether $x$ really
lies in $S$ or not. A constant number of parallel repetitions of the
protocol suffices to boost the gap and give a success probability of
at least $0.75$ for Bob.

One may now wonder if Alice's communication can be made even more succint.
Unfortunately, not by much because there is a $\Omega(n)$ lower bound
for Alice's message irrespective of Bob's message length under the 
condition of approximate privacy of Bob, which holds
for the quantum setting too. This can be proved by restricting 
Alice's subset  $S$ to satisfy $S \subseteq [n]$, Bob's element $x$
to satisfy $x \in [n]$ and then applying the privacy-privacy tradeoff
of \cite{JRS} for the set membership problem. Nevertheless, there is 
still a gap between the upper and lower bounds for Alice's message size.

We now ask if we can achieve approximate privacy for Bob, short message
for Alice and make Bob's internal computation efficient. Unfortunately,
the set system guaranteed by Fact~\ref{fact:setsystem} is non-explicit.
Near explicit constructions of similar set systems were later provided
by Ta-Shma~\cite{TaShma} and Capalbo, Reingold, Vadhan and 
Wigderson~\cite{CRVW}, but their
parameters are worse and Bob's internal computation is still not proved
to be efficient.

We now give a quantum protocol achieving approximate privacy for 
Bob, short message for Alice and efficient internal computation for
Bob. Our protocol uses the efficient quantum Johnson-Lindenstrauss 
transform. The idea behind the protocol is as follows. For a subset
$S \subseteq [m]$, define the following pure quantum state 
$\ket{S} := |S|^{-1/2} \sum_{y \in S} \ket{y}$ in $\C^m$. If $x \in S$
$\braket{x}{S} \geq n^{-1/2}$. If $x \not \in S$, $\braket{x}{S} = 0$.
Now suppose we apply the quantum Johnson Lindenstrauss transform
of Theorem~\ref{thm:qjl} with
$\epsilon := 0.01 n^{-3}$ and measure the name of a block, say 
$i \in [\frac{d_1}{d_2}]$, where $d_1 := m$, 
$d_2 = O(\epsilon^{-2} \log n d_1) = O(n^6 \log m)$. 
The unitary $U$ from the 
$(d_1, s, \lambda, t)$-TPE where $t = O(\epsilon^2 d_2)$,
$\lambda = (\frac{4\epsilon^2 d_2}{d_1^2})^{t/2} e^{-t/2}$,
that is chosen by the transform can be described using 
$\log s = O(d_2 \log d_1)$ bits. Moreover, constructing and applying
the quantum circuit to quantum states,  given the name of the unitary,
can be done in time $\poly(n, \log m)$. Let $\ket{x'}$, $\ket{S'}$ be
the resulting normalised projections in the $i$th block of dimension 
$d_2 = O(n^6 \log m)$. Then, if $x \in S$, 
$\braket{x'}{S'} \geq 0.9 n^{-1/2}$; if $x \not \in S$, 
$\braket{x'}{S'} \leq 0.1 n^{-1/2}$. The distribution on the block names
is within $\ell_1$-distance $\epsilon$ from the uniform distribution 
irrespective of the element $x \in [m]$.

This leads naturally to the following quantum protocol for private
information retrieval, where Alice is given $S \subseteq [m]$,
$|S| \leq n$ and Bob is given $x \in [m]$.
\begin{enumerate}

\item
At first, independently of $x$, Bob chooses a uniformly random unitary 
$U$ from the TPE. He then applies $U$
to $\ket{x}$ and measures the name of a block. He stores the collapsed
pure state that lives in the residual $d_2$-dimensional spaces.
He repeats this process
(with the same $U$ and $\ket{x}$) independently $\Theta(n^2)$ times. 
He then sends Alice the description of $U$, which is like a 
{\em public coin}, followed by the $\Theta(n^2)$ 
block names that were measured (note that in general, they are all 
different); 

\item
Alice makes $\Theta(n^2)$ projections of $\ket{S}$ into $d_2$-dimensional
space corresponding to the 
unitary $U$ and the block names received from Bob. She then sends
these $\Theta(n^2)$ pure quantum states to Bob;

\item
Bob performs $\Theta(n^2)$ SWAP tests between the pure states that
Alice sent versus the pure states that he obtained in the first step
above by collapsing. From the results of these tests, he checks 
whether the fraction of successes was larger than 
$\frac{1}{2} + \frac{0.2}{n}$ or not. If yes, he declares that
$x$ lies in $S$. If not, he declares that $x$ does not lie in $S$.

\end{enumerate}

Bob's message is classical and consists of 
$\log s = O(n^6 (\log m)^2)$ bits of public coin
followed by $O(n^2 \log m)$ bits for the block names. 
Bob's internal computation is efficient i.e. takes 
time $\poly(n, \log m)$. 
The public coin
can be reduced to $O(n \log m)$ bits by a standard technique of
Newman~\cite{Newman}, but then Bob's internal computation is no longer
guaranteed to be efficient. 
Bob's message is almost private since the probability distribution on the
block names is at most $O(\epsilon n^2) = O(1/n)$ in $\ell_1$-distance
from uniform. Alice's message is quantum and consists of
$O(n^2 (\log n + \log \log m))$ qubits. For 
$n \ll \frac{\log m}{\log \log m}$, this is less than $O(n \log m)$.
By a standard Chernoff bound, Bob reaches the correct conclusion
whether $x$ lies in $S$ or not with probability at least $3/4$.

\paragraph{Remark:} The efficient quantum identification code of
Fawzi, Hayden and Sen~\cite[Theorem~4.3]{FHS} can also be easilty 
exploited for private information retrieval. In that protocol,
Bob's message is classical and consists of $O(n^2 \log m)$ bits.
Bob's internal computation is efficient. Bob's message is within
$O(1/n)$ in $\ell_1$-distance from the uniform distribution. 
Alice's message is quantum. However, it consists of
$O(n^2 (\log n + \log \log m) \log \log m)$ qubits, which is more than
Alice's message length in the protocol based on the quantum 
Johnson Lindenstrauss transform. The quantum Johnson Lindenstrauss
transform based protocol achieves small number of qubits for Alice by 
trading off a larger number of bits for Bob, keeping Bob's internal
computation efficient.

\section{Proof of Proposition~\ref{prop:qjl}}
\label{sec:qjlproof}
We use Low's method~\cite{Low}. 
Define the real valued function
$
f(U) := 
\elltwo{\Pi_i U v} - \sqrt{\frac{d_2}{d_1}}
$
where $U$ is a $d_1 \times d_1$ unitary matrix.
From Fact~\ref{fact:conc}, for any $\lambda > 0$,
\[
\Pr_U[|f(U)| \geq \lambda] \leq
4 \exp(-2^{-4} \lambda^2 d_1),
\]
where the probability is taken under the Haar measure on $U$.
Combining this with \cite[Lemma~3.3]{Low}, we get
\[
\E_U[(f(U))^{2m}] \leq
4 \left(\frac{2^4 m}{d_1}\right)^m,
\]
where the expectation is taken over the Haar measure on $U$.
Now define the real valued function
$
g(U) := 
\elltwo{\Pi_i U v}^2 - \frac{d_2}{d_1}.
$
Under the Haar measure on $U$, we have
\begin{eqnarray*}
\lefteqn{\E_U[(g(U))^{2m}]} \\
& = &
\E_U \left[
(f(U))^{2m}  
\left(\elltwo{\Pi_i U v} + \sqrt{\frac{d_2}{d_1}}\right)^{2m}
\right] \\
& \leq &
\left(\frac{4 d_2}{d_1}\right)^m 
\Pr_U \left[\elltwo{\Pi_i U v} \leq \sqrt{\frac{d_2}{d_1}}\right]
\E_{U: \elltwo{\Pi_i U v} \leq \sqrt{\frac{d_2}{d_1}}}[(f(U))^{2m}] \\
&      &
{} +
\sum_{i=2}^{\sqrt{\frac{d_1}{d_2}}}
\left(\frac{((i+1)^2 - 1) d_2}{d_1}\right)^{2m} 
\Pr_U \left[
i \sqrt{\frac{d_2}{d_1}} < \elltwo{\Pi_i U v} \leq 
(i+1) \sqrt{\frac{d_2}{d_1}}
\right] \\
& \leq &
\left(\frac{4 d_2}{d_1}\right)^m \E_U[(f(U))^{2m}] +
4 \sum_{i=2}^{\sqrt{\frac{d_1}{d_2}}}
\left(\frac{((i+1)^2 - 1) d_2}{d_1}\right)^{2m} \exp(-2^{-4} i^2 d_2) \\
& \leq &
2^4 (\frac{2^6 m d_2}{d_1^2})^m +
2^4 (\frac{2^6 d_2^2}{d_1^2})^{m} \exp(-2^{-2} d_2), 
\end{eqnarray*}
where we used Fact~\ref{fact:conc} again in the second inequality.

Now suppose we choose $U$ from a $(d_1, s, \lambda, 2m)$ tensor 
product expander instead
of the Haar measure. Since $(g(U)^{2m}$ is a balanced degree $2m$ 
polynomial in the entries of $U$, its expectation under a TPE
must be close to its expectation under the 
Haar measure. More precisely,
\begin{eqnarray*}
\lefteqn{
|
\E^{\mbox{TPE}}_U[(g(U)^{2m}] -
\E^{\mbox{Haar}}_U[(g(U)^{2m}]
|
} \\
& = &
|
\E^{\mbox{TPE}}_U[
(\Tr[\Pi_j U (\ketbra{v} - \frac{\one}{d_1}) U^\dagger \Pi_j^\dagger])^{2m}
] -
\E^{\mbox{Haar}}_U[
(\Tr[\Pi_j U (\ketbra{v} - \frac{\one}{d_1}) U^\dagger \Pi_j^\dagger])^{2m}
] 
| \\
& = &
|
\E^{\mbox{TPE}}_U[
\Tr[
\Pi_j^{\otimes (2m)} U^{\otimes (2m)} 
(\ketbra{v} - \frac{\one}{d_1})^{\otimes (2m)} 
(U^\dagger)^{\otimes (2m)}
]
] \\
&   &
~~~~~~~
{} -
\E^{\mbox{Haar}}_U[
\Tr[
\Pi_j^{\otimes (2m)} U^{\otimes (2m)} 
(\ketbra{v} - \frac{\one}{d_1})^{\otimes (2m)} 
(U^\dagger)^{\otimes (2m)}
]
] 
| \\
& = &
\left|
\Tr \left[
\Pi_j^{\otimes (2m)} 
\left(
\E^{\mbox{TPE}}_U[
U^{\otimes (2m)} 
(\ketbra{v} - \frac{\one}{d_1})^{\otimes (2m)} 
(U^\dagger)^{\otimes (2m)}
] -
\E^{\mbox{Haar}}_U[
U^{\otimes (2m)} 
(\ketbra{v} - \frac{\one}{d_1})^{\otimes (2m)} 
(U^\dagger)^{\otimes (2m)}
] 
\right)
\right]
\right| \\
& \leq &
\elltwo{\Pi_j^{\otimes (2m)}}
\elltwo{
\E^{\mbox{TPE}}_U[
U^{\otimes (2m)} 
(\ketbra{v} - \frac{\one}{d_1})^{\otimes (2m)} 
(U^\dagger)^{\otimes (2m)}
] -
\E^{\mbox{Haar}}_U[
U^{\otimes (2m)} 
(\ketbra{v} - \frac{\one}{d_1})^{\otimes (2m)} 
(U^\dagger)^{\otimes (2m)}
] 
} \\
& \leq &
(d_2)^m \lambda.
\end{eqnarray*}
Recall that $\lambda$ can be made small at an exponential rate by 
simply sequentially iterating the TPE. 

Now observe that for any probability distribution on $U$, by Markov's
inequality,
\[
\Pr_U
\left[
\elltwo{\Pi_i U v} \not \in (1 \pm \epsilon) \sqrt{\frac{d_2}{d_1}}
\right]
\leq
\Pr_U \left[|g(U)| \geq 2 \epsilon \frac{d_2}{d_1}\right]
\leq
\E_U[(g(U))^{2m}] 
\left(\frac{d_1}{2 \epsilon d_2}\right)^{2m},
\]
where $m$ is any positive integer. Thus,
\begin{eqnarray*}
\lefteqn{
\Pr^{\mbox{TPE}}_U \left[
\elltwo{\Pi_i U v} \not \in (1 \pm \epsilon) \sqrt{\frac{d_2}{d_1}}
\right]
} \\
& \leq &
\left(\frac{d_1}{2 \epsilon d_2}\right)^{2m}
\left(
2^4 \left(\frac{2^6 m d_2}{d_1^2}\right)^m +
2^4 \left(\frac{2^6 d_2^2}{d_1^2}\right)^{m} \exp(-2^{-2} d_2) +
d_2^m \lambda 
\right) \\
& = &
2^4 \left(\frac{2^4 m}{\epsilon^2 d_2}\right)^m +
2^4 \left(\frac{2^4}{\epsilon^2}\right)^{m} \exp(-2^{-2} d_2) +
\left(\frac{d_1^2}{4\epsilon^2 d_2}\right)^m \lambda. 
\end{eqnarray*}
Choosing $m := 2^{-10} \epsilon^2 d_2$, we get
\begin{eqnarray*}
\lefteqn{
\Pr^{\mbox{TPE}}_U \left[
\elltwo{\Pi_i U v} \not \in (1 \pm \epsilon) \sqrt{\frac{d_2}{d_1}}
\right]
} \\ 
& \leq &
2^4 \exp(-2^{-10} \epsilon^2 d_2) + 
2^4 \exp(2^{-7} \epsilon^2 \ln (1/\epsilon) d_2) 
\exp(-2^{-2} d_2) + 
\left(\frac{d_1^2}{4\epsilon^2 d_2}\right)^m \lambda \\
& \leq &
2^6 \exp(-2^{-10} \epsilon^2 d_2),
\end{eqnarray*}
by taking 
$
\lambda < 
(\frac{d_1^2}{4\epsilon^2 d_2})^{-m} e^{-2^{-10} \epsilon^2 d_2}.
$

Note that starting from a TPE with constant value of parameter 
$\lambda_0$ and a constant number of unitaries $s$ 
we can sequentially iterate it
\[
k :=
\frac{2 m \log d_1 + 2m \log (1/\epsilon) + 2^{-10} \epsilon^2 d_2}
{\log (1/\lambda_0)} 
\leq
\frac{2^{-8} d_2 \log d_1}{\log (1/\lambda_0)}
\]
times in order to get 
$
\lambda
$
as small as above. 
Existence of $(d, \poly(1/\lambda_0), \lambda_0, t)$-TPEs for constant
$\lambda_0$ and
$d \geq \poly(t)$ was shown by Harrow and Hastings~\cite{HH} via a
probabilistic argument. Efficient constructions of such TPEs for 
$t = \polylog(d)$ was shown by Sen~\cite{Sen} by combining the existence
result of Harrow and Hastings together with the zigzag product for
quantum expanders~\cite{BST}.
For many applications including the one 
to Johnson-Lindenstrauss, the above expression for $k$ 
is polynomial in the input parameters.
Moreover, choosing a uniformly random unitary from such a design
takes only $O(k)$ random bits as opposed to the $(1/\lambda)^{O(d_1^2)}$
random bits required to choose a Haar random unitary to within
Frobenius distance of $\lambda$.

This completes the proof of Proposition~\ref{prop:qjl}.

\section*{Acknowledgements}
I thank Ashley Montanaro for pointing me to his work~\cite{HMS} 
on compression of
quantum states and Johnson Lindenstrauss lemma during a talk given on a
preliminary version of this work at a workshop in CRM, Montr\'{e}al,
Canada, October 2011.

\bibliography{jl}

\end{document}